\newcommand{\ZX}{\Int_0^+[x]}
\date{}
\renewcommand{\baselinestretch}{1.25}
\title{\Large\bf Counting number of factorizations of a natural number}
\author{Shamik Ghosh}
\begin{document}

\maketitle

\address{Department of Mathematics, Jadavpur University,
Kolkata - 700 032, India.}

\vspace{0.5em} \email{sghosh@math.jdvu.ac.in}

\subjclassetc{Primary 11N25}{Unordered factorization, partition
function.}


\renewcommand{\baselinestretch}{1}
\begin{abstract}
{\small In this note we describe a new method of counting the
number of unordered factorizations of a natural number by means
of a generating function and a recurrence relation arising from
it, which improves an earlier result in this direction.}
\end{abstract}


\renewcommand{\baselinestretch}{1.5}



\section{Introduction}

Consider the natural number $18$. It has $4$ distinct
``factorizations'', namely
$$18=2.3.3=2.9=3.6=18.$$
Similarly, there are $9$ ways of factorizing $36$: $36=2.2.3.3
=2.2.9=2.3.6=3.3.4=2.18=3.12=4.9=6.6=36$. Our problem is to find
this number for any natural number $n$. Since we are not
distinguishing between $2\cdot 9$ and $9\cdot 2$, such a
factorization is called {\em unordered}. A
{\em\textbf{partition}} of a natural number $n$ is a
representation of $n$ as the sum of any number of positive
integral parts, where the order of the parts is irrelevant. The
number of such partitions of $n$ is known as the
{\em\textbf{partition function}} and is denoted by $p(n)$.
Likewise the function $p^*(n)$ denotes the number of ways of
expressing $n$ as a product of positive integers greater than
$1$, the order of the factors in the product being irrelevant.
For convenience, $p^*(1)$ is assumed to be $1$. Clearly $p^*(n)$
is the number of unordered factorizations of $n$. In 1983, Hughes
and Shalit \cite{HuS} obtained a bound for $p^*(n)$, namely,
$p^*(n)\leqslant 2n^{\surd 2}$ which was then improved to
$p^*(n)\leqslant n$ by Mattics and Dodd \cite{MD} in 1986. By
this time Canfield, Erd\"{o}s and Pomerance \cite{CEP} modified a
result of Oppenheim regarding the maximal order of $p^*(n)$. They
obtained another bound for $p^*(n)$ and described an algorithm
for it. An average estimate for $p^*(n)$ was given by Oppenheim
\cite{OP} which was also proved independently by Szekeres and
Turan \cite{ST}. Finally, in 1991, Harris and Subbarao \cite{HS}
came with a generating function and a recursion formula for
$p^*(n)$. One may consider \cite{D} and \cite{GN} for some
problems associated with $p^*(n)$. For a list of values and
computer programming one may consider the website:
http://www.research.att.com/cgi-bin/access.cgi/as/njas/sequences/
 (sequence no. A001055).

\vspace{1em} In this note, we describe a new method for counting
$p^*(n)$ and obtain a generating function for it which is
followed by a recurrence relation that generalizes the one given
by Harris and Subbarao \cite{HS} as well as the one given by
Euler for $p(n)$. The final recursion formula improves the one
given in \cite{HS} as it contains less number of terms. It is
important to note that we wish to develop an algebraic approach
to the problem which might be helpful for other similar
situations in future. Also we note some errors in describing an
equivalent form of the recurrence relation in \cite{HS}.
Throughout the note we denote set of all natural numbers,
non-negative integers, integers and rational numbers by $\Nat,\
\Int_0^+,\ \Int,\ \Rat$ respectively.

\section{Representation of numbers by polynomials}

Consider the monoid $(\Nat,\cdot)$ of natural numbers under usual
multiplication. For any natural number $n$, let $S(n)$ be the
submonoid of $(\Nat,\cdot)$, generated by the set
 of prime factors of $n$, i.e., if the prime factorization of $n$ is
\begin{equation}\label{eq:n}
n=p_1^{n_1}p_2^{n_2}\ldots p_k^{n_k},
\end{equation}
where $p_i$ are distinct primes, $n_i\in\Nat$ for all
$i=1,2,\ldots ,k,\ (k\in\Nat)$, then
$$S(n)=\Set{p_1^{r_1}p_2^{r_2}\ldots
p_k^{r_k}\in\Nat}{r_i\in\Int_0^+,\ i=1,2,\ldots ,k}.$$ We show
that $S(n)$ has an interesting algebraic structure. Define the
partial ordering $\leq^\cdot$ on $S(n)$ by
$$a\leq^\cdot b\ \Longleftrightarrow\ a\ \textrm{
divides }\ b.$$ This ordering on $S(n)$ is, in fact, a lattice
ordering where $a\vee b=\lcm (a,b)$ and $a\wedge b=\gcd (a,b)$
for all $a,b\in S(n)$. Moreover this lattice is distributive and
bounded below by $1$. A monoid $S$ is called a
{\em\textbf{lattice-ordered semigroup}} if it has a lattice
ordering that satisfies $a(b\vee c)=ab\vee ac$ and $(b\vee
c)a=ba\vee ca$, for all $a,b,c\in S$. Now for all $a,b,c\in
S(n)$, $a\{\lcm (b,c)\}=\lcm (ab,ac)$. So we have the following
proposition:

\begin{prop}
For any natural number $n$, $(S(n),\cdot,\leq^\cdot)$ is a
lattice-ordered semigroup.
\end{prop}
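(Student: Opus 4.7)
The plan is to verify in turn the three ingredients that define a lattice-ordered semigroup on $S(n)$: the monoid structure, the lattice ordering, and the distributive compatibility between multiplication and the join.

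First I would confirm that $S(n)$ is indeed a submonoid of $(\Nat,\cdot)$. Writing $a,b\in S(n)$ as $a=p_1^{a_1}\cdots p_k^{a_k}$ and $b=p_1^{b_1}\cdots p_k^{b_k}$ with $a_i,b_i\in\Int_0^+$, the product $ab=p_1^{a_1+b_1}\cdots p_k^{a_k+b_k}$ still lies in $S(n)$, and $1=p_1^0\cdots p_k^0$ is in $S(n)$.

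Next, for the lattice ordering, I would use the standard translation of divisibility into componentwise order on the exponent tuples $(a_1,\ldots,a_k)$. Under this translation, $\gcd(a,b)=p_1^{\min(a_1,b_1)}\cdots p_k^{\min(a_k,b_k)}$ and $\lcm(a,b)=p_1^{\max(a_1,b_1)}\cdots p_k^{\max(a_k,b_k)}$; both lie in $S(n)$ and realize the meet and join, respectively. The distributivity of the lattice and the lower bound $1$ mentioned parenthetically in the text follow at once from the distributive laws between $\min$ and $\max$ on $\Int_0^+$.

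The substantive step is the distributive compatibility $a(b\vee c)=ab\vee ac$, which the text has already flagged as $a\cdot\lcm(b,c)=\lcm(ab,ac)$. I would prove this exponent by exponent: the $i$-th exponent on the left is $a_i+\max(b_i,c_i)$, while on the right it is $\max(a_i+b_i,a_i+c_i)$, and these agree by the elementary identity $x+\max(y,z)=\max(x+y,x+z)$ on $\Int_0^+$. The opposite-sided law $(b\vee c)a=ba\vee ca$ is then immediate from the commutativity of multiplication in $\Nat$.

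I do not anticipate a serious obstacle: once the exponent-tuple reformulation is in place, every claim reduces to a routine arithmetic fact about $\min$ and $\max$. The only mild point deserving care is the closure of $S(n)$ under $\gcd$ and $\lcm$, which is transparent from the description of $S(n)$ as the set of natural numbers whose prime divisors all lie in $\{p_1,\ldots,p_k\}$.
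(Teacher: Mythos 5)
Your proposal is correct and follows essentially the same route as the paper: the paper asserts (in the paragraph preceding the proposition) that $\leq^\cdot$ is a lattice ordering with $a\vee b=\lcm(a,b)$, $a\wedge b=\gcd(a,b)$, and that $a\{\lcm(b,c)\}=\lcm(ab,ac)$, and your exponentwise verification via $x+\max(y,z)=\max(x+y,x+z)$ is precisely the computation the paper itself carries out later when proving the isomorphic statement for $(P_k[x],+,\leqq)$ in Theorem~\ref{t:losem}. No gaps.
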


\begin{defn}\label{d:fxn}
Now corresponding to each natural number $n$ we associate a
polynomial in the polynomial semiring $\Int_0^+[x]$ as
$$f(x;n)=n_1+n_2x+n_3x^2+\cdots +n_kx^{k-1},$$
where $n$ has the prime factorization (\ref{eq:n}).

Next we define a binary relation $\leqq$ on $\Int_0^+[x]$ as
follows:

\noindent Let $f(x)=a_0+a_1x+a_2x^2+\cdots +a_mx^m$ and
$g(x)=b_0+b_1x+b_2x^2+\cdots +b_nx^n$. Then
$$f(x)\leqq g(x)\ \Longleftrightarrow\ m\leqslant n\ \textrm{ and }\
a_i\leqslant b_i\ \textrm{ for all }\ i=0,1,2,\ldots ,m.$$
Clearly $\leqq$ is a partial ordering on $\Int_0^+[x]$. Finally,
let us denote the set of all polynomials in $\Int_0^+[x]$ of
degree less than $k$ by $P_k[x]$.
\end{defn}

\begin{thm}\label{t:losem}
$(P_k[x],+,\leqq)$ is a lattice-ordered semigroup which is
isomorphic to $(S(n),\cdot,\leq^\cdot)$, where $n$ has the prime
factorization (\ref{eq:n}).
\end{thm}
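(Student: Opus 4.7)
The plan is to exhibit an explicit bijection $\phi\colon S(n)\to P_k[x]$ and verify in turn that it respects the monoid operation, the partial order, and hence the lattice operations together with the distributive identity required of a lattice-ordered semigroup.

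First I would define $\phi(m) := r_1+r_2x+\cdots+r_kx^{k-1}$ for each $m=p_1^{r_1}p_2^{r_2}\cdots p_k^{r_k}\in S(n)$, allowing $r_i=0$ when $p_i$ does not actually appear in $m$. By the fundamental theorem of arithmetic $\phi$ is well-defined, and the inverse procedure --- reading off the coefficients of a polynomial of degree less than $k$ and forming the corresponding product of prime powers drawn from $\{p_1,\ldots,p_k\}$ --- exhibits $\phi$ as a bijection onto $P_k[x]$. In particular the identity $1\in S(n)$ corresponds to the zero polynomial.

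Next I would check that $\phi$ is simultaneously a monoid and an order isomorphism. Because $(p_1^{r_1}\cdots p_k^{r_k})(p_1^{s_1}\cdots p_k^{s_k})=p_1^{r_1+s_1}\cdots p_k^{r_k+s_k}$, exponents add under multiplication, which is precisely coefficientwise addition of polynomials; hence $\phi(ab)=\phi(a)+\phi(b)$. For the order, $a\leq^\cdot b$ iff $a\mid b$ iff the exponent of each $p_i$ in $a$ is bounded by the corresponding exponent in $b$, which is exactly the defining condition of $\phi(a)\leqq\phi(b)$.

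Finally, since $\phi$ is an order isomorphism and $S(n)$ is a distributive lattice by Proposition~2.1, the poset $(P_k[x],\leqq)$ automatically inherits a distributive lattice structure in which $f\vee g$ and $f\wedge g$ are given coefficientwise by maximum and minimum (this also matches a direct check from the definition of $\leqq$). The required distributive law $f+(g\vee h)=(f+g)\vee(f+h)$ then follows either by transporting the identity $a\cdot\lcm(b,c)=\lcm(ab,ac)$ of Proposition~2.1 along $\phi$, or by the one-line coefficientwise computation $a_i+\max(b_i,c_i)=\max(a_i+b_i,a_i+c_i)$. I do not anticipate a real obstacle here; the argument is essentially transport of structure, and the only point requiring care is being consistent about allowing zero exponents and zero leading coefficients so that $\phi$ is defined on all of $S(n)$ and surjects onto all of $P_k[x]$.
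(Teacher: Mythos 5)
Your proposal is correct and follows essentially the same route as the paper: the same map (exponent vector of $m$ to the polynomial $f(x;m)$), the same verification that multiplication corresponds to coefficientwise addition and divisibility to coefficientwise inequality, and the same coefficientwise identity $a_i+\max\{b_i,c_i\}=\max\{a_i+b_i,a_i+c_i\}$ for the lattice-ordered semigroup law. The paper merely verifies the lattice-ordered semigroup axioms on $P_k[x]$ directly before exhibiting the isomorphism, whereas you also note the (equally valid) transport-of-structure shortcut; this is a cosmetic reordering, not a different argument.
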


\begin{proof}
Let $f(x)=a_1+a_2x+a_3x^2+\cdots +a_kx^{k-1},\
g(x)=b_1+b_2x+b_3x^2+\cdots +b_kx^{k-1}\in P_k[x]$. Then it is
routine to verify that $f\vee g=c_1+c_2x+c_3x^2+\cdots
+c_kx^{k-1}$ and $f\wedge g=d_1+d_2x+d_3x^2+\cdots +d_kx^{k-1}$,
where $c_i=\max\{a_i,b_i\}$ and $d_i=\min\{a_i,b_i\}$. Thus
$(P_k[x],\leqq)$ is a lattice. Obviously, $(P_k[x],+)$ is an
abelian monoid where the identity element is the zero polynomial.
Now choose $f(x)=\sum\limits_{i=1}^k a_ix^{i-1},\
g(x)=\sum\limits_{i=1}^k b_ix^{i-1},\ h(x)=\sum\limits_{i=1}^k
c_ix^{i-1}\in P_k[x]$. Then $f+(g\vee h)=\sum\limits_{i=1}^k
\big(a_i+\max\{b_i,c_i\}\big)x^{i-1}= \sum\limits_{i=1}^k
\big(\max\{a_i+b_i,a_i+c_i\}\big)x^{i-1} = (f+g)\vee (f+h)$.
Therefore $(P_k[x],+,\leqq)$ is a lattice-ordered semigroup.

Now define a map $\Map{\psi}{S(n)}{P_k[x]}$ by $\psi(m)=f(x;m)$
for all $m\in S(n)$. That $\psi$ is bijective follows from
Definition \ref{d:fxn}. Let $m_1=\prod\limits_{i=1}^k
p_i^{r_{1i}},\ m_2=\prod\limits_{i=1}^k p_i^{r_{2i}}\in S(n)$.
Then
$$\begin{array}{ll}
&m_1\leq^\cdot m_2\\
\Longleftrightarrow & m_1\ \textrm{ divides }\ m_2\\
\Longleftrightarrow & r_{1i}\leqslant r_{2i}$ for each
$i=1,2,\ldots,k\\
\Longleftrightarrow & r_{11}+r_{12}x+r_{13}x^2+\cdots
+r_{1k}x^{k-1}\leqq r_{21}+r_{22}x+r_{23}x^2+\cdots
+r_{2k}x^{k-1}\\
\Longleftrightarrow & f(x;m_1)\leqq f(x;m_2)\\
\Longleftrightarrow & \psi(m_1)\leqq \psi(m_2).
\end{array}$$

Also $m_1m_2= \prod\limits_{i=1}^k p_i^{r_{1i}+r_{2i}}$. Then
$f(x;m_1m_2)=\sum\limits_{i=1}^k (r_{1i}+r_{2i})x^{i-1}
=\sum\limits_{i=1}^k {r_{1i}}x^{i-1}+\sum\limits_{i=1}^k
{r_{2i}}x^{i-1}=f(x;m_1)+f(x;m_2)$. Thus
$\psi(m_1m_2)=\psi(m_1)+\psi(m_2)$. Therefore $\psi$ is an
isomorphism, i.e., $(S(n),\cdot,\leq^\cdot)\ \cong\
(P_k[x],+,\leqq)$, as required.
\end{proof}

\begin{defn}
 For
any $f(x)\in\Int_0^+[x]$, let $p(f)$ denote the number of
partitions of the polynomial $f(x)$ in terms of addition of
polynomials (not all distinct) in $\Int_0^+[x]$, where the order
of addition is irrelevant. We assume $p(0)=1$.
\end{defn}

For example, the distinct partitions of the polynomial $2+x$ in
$\Int_0^+[x]$ are
\begin{eqnarray*}
2+x&=&(1)+(1)+(x)\\
&=&(1)+(1+x)\\
&=&(2)+(x)\\
&=&(2+x)
\end{eqnarray*}

\noindent Note that $f(x;12)=2+x$ and compare the above partitions
with usual factorizations: $12=2.2.3=2.(2.3)=2^2.3=12$.

\begin{thm}
For any natural number $n$, $p^*(n)=p(f(x;n))$.
\end{thm}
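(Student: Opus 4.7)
The plan is to promote the monoid isomorphism $\psi : S(n) \to P_k[x]$ established in Theorem \ref{t:losem} into a bijection between the set of unordered factorizations of $n$ and the set of partitions of $f(x;n)$, which will immediately yield the counting identity. The decisive feature of $\psi$ is that it converts multiplication in $S(n)$ into addition in $P_k[x]$ and sends $1$ to the zero polynomial, so a multiplicative decomposition of $n$ translates transparently into an additive decomposition of $f(x;n)$.

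First I would verify the forward direction. Any factor $m$ appearing in a factorization of $n$ must divide $n$, hence lies in $S(n)$; moreover $m>1$ if and only if $f(x;m)\ne 0$. Thus an unordered factorization $n=m_1 m_2\cdots m_r$ with each $m_i>1$ produces, via $\psi$, the decomposition $f(x;n)=f(x;m_1)+f(x;m_2)+\cdots +f(x;m_r)$ into nonzero polynomials in $P_k[x]\subseteq \Int_0^+[x]$, which is by definition a partition of $f(x;n)$.

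Next I would handle the reverse direction, where the only real subtlety lies. Given a partition $f(x;n)=g_1(x)+g_2(x)+\cdots +g_r(x)$ in $\Int_0^+[x]$ with each $g_i\ne 0$, I need to argue that each $g_i$ actually belongs to $P_k[x]$ so that $\psi^{-1}$ can be applied. This is because the coefficients of the $g_i$ are non-negative integers summing coefficientwise to those of $f(x;n)$, so every coefficient of every $g_i$ is bounded above by the corresponding coefficient of $f(x;n)$; in particular each $\deg g_i<k$. Then $g_i=\psi(m_i)$ for some $m_i\in S(n)$ with $m_i>1$, and since $\psi$ is a monoid isomorphism we recover $n=m_1m_2\cdots m_r$ as an unordered factorization of $n$ into factors greater than $1$.

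Finally, because $\psi$ is a bijection, the two constructions above are mutually inverse at the level of multisets (the unorderedness on both sides is just the passage from sequences to multisets, which is compatible with any bijection on entries), yielding the required bijection and hence $p^*(n)=p(f(x;n))$. The degenerate case $n=1$ is covered by $f(x;1)=0$ together with the conventions $p^*(1)=p(0)=1$. The main (indeed essentially only) obstacle is the boundedness observation in the reverse direction; without it one might worry that partitions in the ambient semiring $\Int_0^+[x]$ could escape $P_k[x]$, but the non-negativity of coefficients prevents exactly this.
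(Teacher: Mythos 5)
Your proposal is correct and follows essentially the same route as the paper: both exploit that $\psi$ turns products into sums and restricts to a bijection between the divisors of $n$ and the polynomials $\leqq f(x;n)$ (the paper phrases this as the lattice isomorphism $(F(n),\leq^\cdot)\cong(S(f(x;n)),\leqq)$, you phrase it as coefficientwise boundedness of the summands), and then transport factorizations to partitions termwise. Your explicit treatment of the reverse direction and of the multiset-level identification is a welcome bit of extra care, but it is not a different argument.
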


\begin{proof}
Let $n\in\Nat$ and $F(n)$ denote the set of all factors of $n$.
Then $(F(n),\leq^\cdot)$ is a sublattice of $S(n)$. On the other
hand, define\footnote{$S(f(x))$ is called the
{\em\textbf{section}} of $f(x)$ in $\Int_0^+[x]$.} the set
$$S(f(x))=\Set{g(x)\in\Int_0^+[x]}{g(x)\leqq f(x)}.$$
Then $S(f(x))$ is a sublattice of $P_k[x]$, where $f(x)=f(x;n)$
and $n$ has the prime factorization (\ref{eq:n}). By Theorem
\ref{t:losem}, it follows that the restriction of the map $\psi$
on $F(n)$ is a lattice isomorphism from $(F(n),\leq^\cdot)$ onto
$\big(S(f(x;n)),\leqq\big)$. Indeed, let
$m=p_1^{r_1}p_2^{r_2}\ldots p_k^{r_k}\in F(n)$ for some
$r_1,r_2,\ldots,r_k\in\Int_0^+$. Since $m$ is a factor of $n$, we
have $r_i\leqslant n_i$ for each $i=1,2,\ldots,k$. Hence
$\psi(m)=f(x;m)=r_1+r_2x+r_3x^2+\cdots +r_kx^{k-1} \leqq
n_1+n_2x+n_3x^2+\cdots +n_kx^{k-1} = f(x;n)$ which implies
$\psi(m)\in S(f(x;n))$. Conversely, let $g(x)\in S(f(x;n))$. Then
$\deg g(x)\leqslant \deg f(x)=k-1$. Let
$g(x)=b_1+b_2x+b_3x^2+\cdots +b_kx^{k-1}$ for some
$b_1,b_2,\ldots,b_k\in\Int_0^+$. Then $b_i\leqslant n_i$ for each
$i=1,2,\ldots ,k$ and $g(x)=g(x;p_1^{b_1}p_2^{b_2}\ldots
p_k^{b_k})=\psi(p_1^{b_1}p_2^{b_2}\ldots p_k^{b_k})$, which
implies $\psi(F(n))=S(f(x;n))$. Thus we have $(F(n),\leq^\cdot)\
\cong\ \big(S(f(x;n)),\leqq\big)$. Also since
$f(x;m_1m_2)=f(x;m_1)+f(x;m_2)$ for all $m_1,m_2\in F(n)$, there
exists a one-to-one correspondence between factorizations
$n=m_1m_2\ldots m_r$ of $n$ with partitions
$f(x;m_1)+f(x;m_2)+\cdots +f(x;m_r)=f(x;m_1m_2\ldots m_r)=f(x;n)$
of $f(x;n)$. Thus we have $p^*(n)=p(f(x;n))$.
\end{proof}

\begin{cor}\label{c:pn}
 Let $p$ be a prime number and $n\in\Nat$. Then
$p^*(p^n)=p(n)$.
\end{cor}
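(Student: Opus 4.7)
The plan is to derive this corollary as a direct specialization of the preceding theorem $p^*(n) = p(f(x;n))$ to a prime power $p^n$. The key observation will be that $f(x; p^n)$ is just the constant polynomial $n$, and that partitions of a constant polynomial in $\ZX$ are in bijection with ordinary integer partitions of $n$.

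First I would apply Definition \ref{d:fxn} to the prime factorization $p^n = p^n$: since there is only one prime present and it appears with exponent $n$, the associated polynomial is $f(x; p^n) = n$, a constant in $P_1[x]$. By the previous theorem, $p^*(p^n) = p(f(x; p^n)) = p(n)$ as a polynomial-partition count. It remains to identify this count with the classical partition function $p(n)$.

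For that identification I would argue as follows. If $n = g_1(x) + g_2(x) + \cdots + g_r(x)$ is a partition of the constant polynomial $n$ in $\ZX$, then comparing degrees forces every $g_i(x)$ to be a constant polynomial in $\Int_0^+$, and moreover each $g_i \geqslant 1$ (since the zero polynomial does not appear as a part, consistent with the convention $p(0) = 1$ applied at $n=0$). Conversely, any partition $n = a_1 + a_2 + \cdots + a_r$ of the positive integer $n$ as a sum of positive integers gives a partition of the constant polynomial $n$ in $\ZX$. This correspondence is clearly a bijection between polynomial partitions of the constant $n$ and integer partitions of $n$, so $p(f(x; p^n)) = p(n)$.

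The only subtlety, and thus the main thing to be careful about, is to justify that polynomials of positive degree cannot appear as parts in a partition of a constant polynomial; this follows immediately from the fact that addition in $\ZX$ preserves the maximum degree of the summands (coefficients are in $\Int_0^+$, so no cancellation can occur). Combining these steps yields $p^*(p^n) = p(n)$.
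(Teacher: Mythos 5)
Your proof is correct and follows exactly the route the paper intends: the corollary is stated as an immediate consequence of the theorem $p^*(n)=p(f(x;n))$, using that $f(x;p^n)$ is the constant polynomial $n$ and that its partitions in $\ZX$ biject with ordinary partitions of $n$. Your extra care in ruling out positive-degree parts (no cancellation in $\Int_0^+$) is a worthwhile detail the paper leaves implicit.
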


\begin{rem}\label{r:rearr}
It is clear that the value of $p^*(n)$ is independent of the
particular primes involved in the prime factorization expression
of $n$, i.e., if $n$ has the prime factorization (\ref{eq:n}) and
$m=q_1^{n_1}q_2^{n_2}\ldots q_k^{n_k}$, where $q_i$ are distinct
primes, then $p^*(m)=p^*(n)$. Thus the polynomial $f(x;n)$ in
Definition \ref{d:fxn} may be different for different arrangement
of primes in the prime factorization expression of $n$. But the
value of $p(f(x;n))$ remains same for each such arrangements. In
particular, $p(2+x)=p^*(2^2.3)=4=p^*(2.3^2)=p(1+2x)$. Indeed, the
distinct partitions of the polynomial $1+2x$ in $\Int_0^+[x]$ are
\begin{eqnarray*}
1+2x&=&(1)+(x)+(x)\\
&=&(1+x)+(x)\\
&=&(1)+(2x)\\
&=&(1+2x)
\end{eqnarray*}
More generally, $p^*(p^2q)=4$ for any pair of distinct primes
$p,q$.
\end{rem}

\section{Generating function and recurrence relations}

Let $n$ be a natural number. We know that the number of
partitions, $p(n)$ of $n$ is given \cite{H} by the following
classical generating function found by Euler:
\begin{eqnarray}
F(x)&=&\frac{1}{(1-x)(1-x^2)(1-x^3)\ldots} \label{eq:genfn1}\\
&=&\prod\limits_{n=1}^{\infty} \frac{1}{1-x^n} \label{eq:genfn2}\\
&=&\prod\limits_{i=1}^{\infty} \Big(1+\sum\limits_{n=1}^{\infty}
x^{ni}\Big) \label{eq:genfn3}\\
 &=&1+\sum\limits_{n=1}^{\infty} p(n) x^n. \label{eq:genfn4}
\end{eqnarray}

In the above equalities, since (\ref{eq:genfn3}) provides all
possible positive integral powers of $x$ less than $n$ (with all
possible multiplicities), product of these terms produce the term
$x^n$ as many times as $n$ can be expressed as a sum of positive
integers which is exactly the number of partitions of $n$, i.e.,
the term $x^n$ occurs $p(n)$ times. So we get the coefficient
$p(n)$ of $x^n$ in (\ref{eq:genfn4}). Similarly, if we wish to
find the number of partitions of the polynomial $f(x)$ in the
polynomial semiring $\ZX$, we have to consider the product of
summations which provides all possible polynomials in $\ZX$ less
than $f(x)$ (with all possible multiplicities) as indices. Thus we
have the following {\em formal} generating function for $p(f(x))$:
\begin{eqnarray}
\mathcal{F}(x)&=&\prod\limits_{{g\,\in\
\ZX}^\star}\frac{1}{1-e^{g(x)}}
 \label{eq:genfnpf1}\\
&=&\prod\limits_{{g\,\in\ \Int_0^+[x]}^\star} \Big(
1+\sum\limits_{n=1}^{\infty}
e^{ng(x)}\Big) \label{eq:genfnpf2}\\
&=&1+\sum\limits_{{f\,\in\ \Int_0^+[x]}^\star} p(f)\ e^{f(x)},
\label{eq:genfnpf3}
\end{eqnarray}
where ${\ZX}^\star=\ZX\smallsetminus \set{0}$.

\begin{rem}
The expressions (\ref{eq:genfnpf1})-(\ref{eq:genfnpf3}) are
merely formal in the sense that for any particular $f(x)\in\ZX$,
the coefficients of $e^{f(x)}$ in either side are same. So the
convergence problem does not arise here. However, if one insists
on it, one may replace $e$ by $e_1=\frac{1}{e}$ in which case
(\ref{eq:genfnpf1})-(\ref{eq:genfnpf3}) are absolutely and
uniformly convergent for all positive integral values of $x$. For
example, consider $\mathcal{F}(1)=\prod\limits_{{g\,\in\
\ZX}^\star}\frac{1}{1-e_1^{g(1)}}$. The product
$\prod\limits_{{g\,\in\ \ZX}^\star} (1-e_1^{g(1)})$ is convergent
if $\sum\limits_{{g\,\in\ \ZX}^\star} e_1^{g(1)}$ is convergent.
Now $\sum\limits_{{g\,\in\ \ZX}^\star}
e_1^{g(1)}=\sum\limits_{n=1}^{\infty} p(n)e_1^n$ which is
absolutely and uniformly convergent for $e_1=\frac{1}{e}$
\cite{H}.
\end{rem}

\vspace{1em} Now (\ref{eq:genfnpf1}) can be written in the form:
\begin{equation} \label{eq:genfnpf4}
\mathcal{F}(x)\ =\ \prod\limits_{\atop{{g\,\in\
\Int_0^+[x]}^\star}{g\textrm{ is
primitive}}}\frac{1}{\prod\limits_{n=1}^{\infty} (1-e^{ng(x)})}
\end{equation}

which is again by (\ref{eq:genfn4}),
\begin{equation} \label{eq:genfnpf5}
\mathcal{F}(x)\ =\ \prod\limits_{\atop{{g\,\in\
\Int_0^+[x]}^\star}{g\textrm{ is primitive}}} \Big(
1+\sum\limits_{n=1}^{\infty} p(n)\ e^{ng(x)}\Big)
\end{equation}

So we have the following generating function for $p(f(x))$:
\begin{equation}
1+\sum\limits_{{f\in\,\Int_0^+[x]}^\star} p(f)\ e^{f(x)}\ =\
\prod\limits_{\atop{{g\in\,\Int_0^+[x]}^\star}{g\textrm{ is
primitive}}} \Big( 1+\sum\limits_{n=1}^{\infty} p(n)\
e^{ng(x)}\Big)
\end{equation}

Using this we describe a method of calculating $p(f(x))$:

\begin{equation}
p(f(x))=\ \textrm{ coefficient of }\ e^{f(x)}\ \textrm{ in }\
\prod\limits_{\atop{0<g\leqslant f,\ g\in\,\Int_0^+[x]}{g\textrm{
is primitive}}} \Big( 1+\sum\limits_{n=1}^{\infty} p(n)\
e^{ng(x)}\Big).
\end{equation}

\begin{exmp}
Let $n=12$. Then $n=2^2.3$ and so the associated polynomial
$f(x;12)=2+x$. Now primitive polynomials less than or equal to
$2+x$ in $\Int_0^+[x]$ are $1,x,1+x$ and $2+x$. So we have
\begin{eqnarray*}
p(f(x;12))&=&p(2+x)\\
&=&\textrm{ coefficient of }\ e^{2+x}\ \textrm{ in }\\
&&\big(1+p(1)e+p(2)e^2\big)\ \big(1+p(1)e^x\big)\
\big(1+p(1)e^{1+x}\big)\
\big(1+p(1)e^{2+x}\big)\\
&=&\textrm{ coefficient of }\ e^{2+x}\ \textrm{ in }\
(1+e+2e^2)(1+e^x)(1+e^{1+x})(1+e^{2+x})\\
&=&\textrm{ coefficient of }\ e^{2+x}\ \textrm{ in }\
(1+e+2e^2)(1+e^x+e^{1+x}+e^{2+x})\\
&=&\textrm{ coefficient of }\ e^{2+x}\ \textrm{ in }\
1+e+2e^2+e^x+2e^{1+x}+4e^{2+x}\\
&=&4.
\end{eqnarray*}
 Thus we get that $p^*(12)=p(f(x;12))=4$.
\end{exmp}

\begin{rem} {\bf (i)}\ \ Note that in each step of the above
calculation, we omit the terms $e^{h(x)}$ whenever $h(x)>
f(x;12)$, as these terms have no further contribution in forming
the term $e^{f(x;12)}$.

\vspace{1em} {\bf (ii)}\ \ By the process of calculating
$p(f(x))$, we are getting all the values of $p(g(x))$ for all
$g(x)\leqslant f(x)$ in $\Int_0^+[x]$. For example, the above
calculation gives us
$$p(1)=1,\ p(2)=2,\ p(x)=1,\ p(1+x)=2.$$
\end{rem}

Next we wish to obtain a recurrence relation for $p(f(x))$. From
(\ref{eq:genfnpf1}) and (\ref{eq:genfnpf3}), we get that
\begin{equation}\label{eq:rec1}
\Big(1+\sum\limits_{{f\,\in\ \Int_0^+[x]}^\star} p(f)\
e^{f(x)}\Big)\ \cdot\ \prod\limits_{{g\,\in\ \ZX}^\star}
\Big(1-e^{g(x)}\Big)\ =\ 1.
\end{equation}
 Now taking
formal derivatives\footnote{The {\em\textbf{formal derivative}} of
a polynomial $h(x)=a_0+a_1x+a_2x^2+\cdots +a_kx^k\in\ZX$ is
defined by $h^\prime(x)=a_1+2a_2x+3a_3x^2+\cdots +ka_kx^{k-1}$.
One can easily extend this definition for formal power series. The
operator derivative is additive and follows Leibnitz' rule. It is
routine to verify that the derivative of
$e^{h(x)}=e^{h(x)}h^\prime(x)$.} on both sides of (\ref{eq:rec1})
we get,

\noindent $\displaystyle{\Big(\sum\limits_{{f\,\in\
\Int_0^+[x]}^\star} p(f)\ e^{f(x)}\cdot f^\prime(x)\Big)\ \cdot\
\prod\limits_{{g\,\in\
\ZX}^\star} \big(1-e^{g(x)}\big)\ +}$\\[0.35em]
\null\hfill $\displaystyle{\Big(1+\sum\limits_{{f\,\in\
\Int_0^+[x]}^\star} p(f)\ e^{f(x)}\Big)\ \cdot
\Big(\sum\limits_{{g\,\in\ \ZX}^\star} \Big\{\big(
-e^{g(x)}g^\prime(x)\big)\prod\limits_{\atop{{g_1\,\in\
\Int_0^+[x]}^\star}{g_1\neq g}} \big(1-e^{g_1(x)}\big)\Big\}\Big)\
=\ 0}$

\noindent which implies
\begin{equation}
\sum\limits_{{f\,\in\ \Int_0^+[x]}^\star} p(f)\ e^{f(x)}\cdot
f^\prime(x)\ =\ \Big(1+\sum\limits_{{f\,\in\ \Int_0^+[x]}^\star}
p(f)\ e^{f(x)}\Big)\cdot \Big( \sum\limits_{{g\,\in\ \ZX}^\star}
\Big\{e^{g(x)} g^\prime(x) \cdot \big(1+\sum\limits_{r=1}^{\infty}
e^{rg(x)}\big)\Big\}\Big).
\end{equation}
Then equating the coefficient of $e^{f(x)}$ of both sides we get
\begin{equation}
p(f)\ f^\prime(x)\ =\ \big(\sum\limits_{r\big| c(f)}
\frac{1}{r}\big)\cdot f^\prime(x)\ +\ \sum\limits_{\atop{0<g< f}{
g\in\,\Int_0^+[x]}} g^\prime(x)\ \Big(\sum\limits_{r=1}^{k_g}
p(f-rg)\Big).
\end{equation}
where $c(f)$ is the content\footnote{i.e., $\gcd$ of all
coefficients  of the polynomial $f(x)$.} of the polynomial $f(x)$
and $k_g=\max\Set{r\in\Nat}{f(x)>rg(x)}$. Considering $p(0)=1$ and
replacing $rg$ by $g$ we finally have
\begin{equation}\label{eq:rec2}
p(f)\ f^\prime(x)\ =\ \sum\limits_{\atop{0<g\leqq f}{
g\in\,\Int_0^+[x]}} \lambda(g)\ g^\prime(x)\ p(f-g).
\end{equation}
where $\lambda(g)=\sum\limits_{i\big| c(g)} \frac{1}{i}$\, . Now
since polynomials on both sides of (\ref{eq:rec2}) are identical,
we may equate coefficients of each power of $x$. So if
$f(x)=n_1+n_2x+n_3x^2+\cdots +n_kx^{k-1}$ and $b_2(g)$ denotes
the coefficient of $x$ in $g(x)$ for each $g\in\ZX$ such that
$0<g(x)\leqq f(x)$, then equating constant terms in
(\ref{eq:rec2}) we have
\begin{equation}\label{eq:rec3}
p(f)\ n_2\ =\ \sum\limits_{\atop{0<g\leqq f}{ g\in\,\Int_0^+[x]}}
\lambda(g)\ b_2(g)\ p(f-g).
\end{equation}
Also by Remark \ref{r:rearr}, one may rearrange coefficients of
$f(x)$ which does not change the value of $p(f)$. Thus for each
$i=1,2,\ldots ,k$, we may rearrange coefficients of $f(x)$ in
such a way that $n_i$ will be the coefficient of $x$. Then we have
\begin{equation}\label{eq:rec4}
p(f)\ n_i\ =\ \sum\limits_{\atop{0<g\leqq f}{ g\in\,\Int_0^+[x]}}
\lambda(g)\ b_i(g)\ p(f-g),
\end{equation}
where $b_i(g)$ denotes the coefficient of $x^{i-1}$ in $g(x)$ for
each $g\in\ZX$ such that $0<g(x)\leqq f(x)$. Therefore suitably
multiplying the above relations by powers of $x$ and adding we get
\begin{equation}\label{eq:rec5}
p(f)\ f(x)\ =\ \sum\limits_{\atop{0<g\leqq f}{ g\in\,\Int_0^+[x]}}
\lambda(g)\ g(x)\ p(f-g),
\end{equation}

\begin{rem}
{\bf (i)}\ We first note that (\ref{eq:rec5}) is a nice
generalization of Euler's recurrence relation for $p(n),\
(n\in\Nat)$ which is given by
\begin{equation}\label{eq:euler}
n\, p(n)=\sum\limits_{j=1}^{n} \sigma(j)\, p(n-j),
\end{equation}
 where
$\sigma(j)$ denotes the sum of all divisors of $j$ and $p(0)$ is
assumed to be $1$. Now (\ref{eq:euler}) is immediately obtained
from (\ref{eq:rec5}) by putting $f(x)=n$ (the constant
polynomial), as we already have, by Corollary \ref{c:pn},
$p(n)=p^*(p^n)$. Note that $\lambda(j)j=\sigma(j)$ for all
$j=1,2,\ldots ,n$.

\vspace{1em} {\bf (ii)}\ The recurrence relation (\ref{eq:rec5})
also generalizes the one obtained by Harris and Subbarao
\cite{HS}. In Remark 2 of \cite{HS} (pg.477), they described an
equivalent form of their recurrence relation as follows:

{\small Consider the vector $\vec{\A}(n)=(\A_1,\A_2,\ldots \A_k)$
for the natural number $n=q_1^{\A_1}q_2^{\A_2}\ldots q_k^{\A_k}$.
Then
\begin{equation}\label{eq:rec6}
p^*(\vec{\A})\ \|\vec{\A}\|\ =\ \sum\limits_{\vec{0}\le
\vec{\B}\le\vec{\A}} p^*(\vec{\A}-\vec{\B})
\lambda(\vec{\B})\|\vec{\B}\|,
\end{equation}
where $p^*(\vec{\A})=p^*(n),\ \|\vec{\A}\|=\prod_{j=1}^{k} \A_j,\
\lambda(\vec{\A})=\sum_{i:i\big| \A_j \textrm{ for } 1\le j\le
k\}} 1/i$ and $\vec{\B}<\vec{\A}$ means that $\B_j\le \A_j$ for
$1\le j\le k$.}

We first note that the limit under sum in (\ref{eq:rec6}) should
be $\vec{0}< \vec{\B}\le\vec{\A}$ as $\lambda(\vec{0})$ is not
defined. Secondly, $\|\vec{\A}\|$ will be $\sum_{j=1}^{k} \A_j$,
as one may verify $p^*(18)=5$ by (\ref{eq:rec6}), which is wrong.
Finally, while defining the ordering for vectors one has to use
$\le$ instead of $<$. However keeping aside these printing
mistakes, the correct version of (\ref{eq:rec6}) is given by
\begin{equation}\label{eq:rec7}
p^*(\vec{\A})\ \|\vec{\A}\|\ =\ \sum\limits_{\vec{0}<
\vec{\B}\le\vec{\A}} p^*(\vec{\A}-\vec{\B})
\lambda(\vec{\B})\|\vec{\B}\|,
\end{equation}
where $\|\vec{\A}\|=\sum\limits_{j=1}^{k} \A_j$. Now
(\ref{eq:rec7}) immediately follows from (\ref{eq:rec5}) by
putting $x=1$ (or, adding equations (\ref{eq:rec4}) for
$i=1,2,\ldots,k$).
\end{rem}

\vspace{1em} Now for any $n\in\Nat$ with prime factorization
(\ref{eq:n}), we define $c(n)=\gcd\set{n_1,n_2,\ldots n_k}$ and
$\lambda(n)=\sum\limits_{i\big| c(n)} \frac{1}{i}$. Then equating
constant terms in (\ref{eq:rec5}) we get
\begin{equation}\label{eq:rec8}
p(f)\ n_1\ =\ \sum\limits_{\atop{0<g\leqq f}{ g\in\,\Int_0^+[x]}}
\lambda(g)\ b_1(g)\ p(f-g),
\end{equation}
where $b_1(g)$ denotes the constant term of the polynomial $g(x)$
for each $g\in\ZX$ such that $0<g(x)\leqq f(x)$. This implies
\begin{equation}\label{eq:rec9}
p^*(n)\ n_1\ =\ \sum\limits_{d\big| n,\ p_1\big| d}
r_1\lambda(d)\ p^*\big(\frac{n}{d}\big),
\end{equation}
where the summation runs over $d=p_1^{r_1}p_2^{r_2}\ldots
p_k^{r_k}$ for $0<r_1\leqslant n_1$ and $0\leqslant r_j\leqslant
n_j,\ j=2,3,\ldots ,k$. Thus we have
\begin{equation}\label{eq:rec10}
p^*(n)\ n_1\ =\ \sum\limits_{i=1}^{n_1} i\cdot
\Big(\sum\limits_{d\big| \frac{n}{p_1^{n_1}}} \lambda(p_1^i d)\
p^*\big(\frac{n}{p_1^id}\big)\Big).
\end{equation}

\begin{rem}
We note that (\ref{eq:rec10}) improves the recurrence relation
obtained by Harris and Subbarao \cite{HS} as it contains less
terms. In fact, the number of terms in (\ref{eq:rec10}) is
$(n_2+1)(n_3+1)\ldots (n_k+1)$ less than that of (\ref{eq:rec7})
for $n=p_1^{n_1}p_2^{n_2}\ldots p_k^{n_k}$. Thus, in view of
Remark \ref{r:rearr}, it is advisable to arrange prime factors of
$n$ in such a way that $n_1$ should be minimum among all $n_i$'s
for quicker computation.
\end{rem}

We summarize the above results in the following:

\begin{thm}\label{t:main}
Let $n=p_1^{n_1}p_2^{n_2}\ldots p_k^{n_k}$ be the prime
factorization of a natural number $n$, where $p_i$ are distinct
primes, $n_i\in\Nat$ for all $i=1,2,\ldots ,k,\ (k\in\Nat)$, with
$n_1=\min\Set{n_i}{i=1,2,\ldots ,k}$. Then
\begin{equation}\label{eq:rec11}
p(f)\ f(x;n)\ =\ \sum\limits_{\atop{0<g\leqq f}{
g\in\,\Int_0^+[x]}} \lambda(g)\ g(x)\ p(f-g),
\end{equation}
where $f(x;n)=n_1+n_2x+n_3x^2+\cdots +n_kx^{k-1},\
\lambda(g)=\sum\limits_{i\big| c(g)} \frac{1}{i},\ c(g)$ being
the content of the polynomial $g(x)$ for each $g\in\ZX$ such that
$0<g(x)\leqq f(x)$. In particular,
\begin{equation}\nonumber
p^*(n)\ n_1\ =\ \sum\limits_{i=1}^{n_1} i\cdot
\Big(\sum\limits_{d\big| \frac{n}{p_1^{n_1}}} \lambda(p_1^i d)\
p^*\big(\frac{n}{p_1^id}\big)\Big),
\end{equation}
where $\lambda(m)=\sum\limits_{i\big| c(m)} \frac{1}{i},\
c(m)=\gcd\set{i,r_2,\ldots r_k}$ for $m=p_1^{i}p_2^{r_2}\ldots
p_k^{r_k},\ 0<i\leqslant n_1$ and $0\leqslant r_j\leqslant n_j,\
j=2,3,\ldots ,k$.
\end{thm}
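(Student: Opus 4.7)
The plan is to consolidate the logarithmic-differentiation argument already indicated in Section 3. Equation (\ref{eq:rec11}) is (\ref{eq:rec5}), obtained by differentiating the reciprocal identity (\ref{eq:rec1}) formally; the multiplicative specialization is then its constant-term component translated through the isomorphism of Theorem \ref{t:losem}.

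First I would treat (\ref{eq:rec1}) as a formal identity: for each prescribed $f\in\ZX^\star$ only finitely many factors of $\prod(1-e^{g(x)})$ contribute to the coefficient of $e^{f(x)}$, so all manipulations below are legitimate formally. Applying the formal derivative $D$ (additive, Leibniz, $D(e^{h})=e^{h}h'$) to (\ref{eq:rec1}) produces a vanishing sum. Dividing through by $P=\prod(1-e^{g(x)})$ and using $BP=1$ (with $B=1+\sum p(f)e^{f(x)}$), together with $1/(1-e^{g(x)})=1+\sum_{r\geq 1}e^{rg(x)}$, the differentiated identity rearranges to the display immediately preceding (\ref{eq:rec2}) in the excerpt.

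Next I would extract the coefficient of $e^{f(x)}$. The left side contributes $p(f)f'(x)$. On the right, the inner sum collapses as $e^{g(x)}g'(x)(1+\sum_{r\geq 1}e^{rg(x)})=\sum_{s\geq 1}e^{sg(x)}g'(x)$; substituting $h=sg$ and observing that the admissible $s$ for fixed $h$ are exactly the divisors of the content $c(h)$ (with $g'=h'/s$), this becomes $\sum_{h>0}\lambda(h)h'(x)e^{h(x)}$. Multiplying by $B$ and collecting the coefficient of $e^{f(x)}$ yields $p(f)f'(x)=\sum_{0<g\leqq f}\lambda(g)g'(x)p(f-g)$, i.e.\ (\ref{eq:rec2}).

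To lift (\ref{eq:rec2}) to (\ref{eq:rec11}) I would equate coefficients of $x^{i-1}$ on both sides and cancel the common factor $i\geq 1$, obtaining $p(f)n_{i+1}=\sum\lambda(g)b_{i+1}(g)p(f-g)$ for $i=1,\ldots,k-1$ (where $b_j(g)$ is the coefficient of $x^{j-1}$ in $g$). This gives the recurrences for $n_2,\ldots,n_k$; for $n_1$ I invoke Remark \ref{r:rearr} to permute the coefficient sequence of $f$ so that $n_1$ occupies the position of ``coefficient of $x$''. Multiplying the $j$-th relation by $x^{j-1}$ and summing over $j=1,\ldots,k$ then reconstitutes (\ref{eq:rec11}). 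For the multiplicative specialization, I would first relabel so that $n_1$ is minimal, then extract the constant term of (\ref{eq:rec11}); under $\psi$ from Theorem \ref{t:losem}, polynomials $g\leqq f$ with nonzero constant term correspond bijectively to divisors $d=p_1^{r_1}\cdots p_k^{r_k}$ of $n$ with $r_1\geq 1$, giving $b_1(g)=r_1$, $\lambda(g)=\lambda(d)$, and $p(f-g)=p^*(n/d)$; grouping by $i=r_1\in\{1,\ldots,n_1\}$ yields the second displayed equation. The main obstacle throughout is the bookkeeping at the step producing (\ref{eq:rec2})---verifying that the re-indexing $sg\mapsto h$ in the telescoped sum assembles the coefficients into exactly the content weight $\lambda(h)=\sum_{i\mid c(h)}1/i$; once that is in place, everything that follows is symmetry (Remark \ref{r:rearr}) and the semigroup isomorphism.
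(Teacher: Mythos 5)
Your proposal is correct and follows essentially the same route as the paper: formal differentiation of the reciprocal identity, extraction of the coefficient of $e^{f(x)}$ with the re-indexing $rg\mapsto g$ producing the content weight $\lambda$, the rearrangement of coefficients via Remark \ref{r:rearr} to cover the constant term, reassembly into the polynomial identity, and translation of the constant-term component through $\psi$ into the divisor sum. The only (harmless) variation is that you read off the relations for $n_2,\ldots,n_k$ directly from the coefficients of $x^{i-1}$ in the differentiated identity, whereas the paper obtains all of them by permuting the coefficients of $f$; both are valid.
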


A simple formula is obtained the case of $n_1=1$. In that case,
$c(m)=1$ for all $m\big| n$ with $p_1\big| m$ and So
$\lambda(m)=1$ for all such $m$. Thus we have

\begin{cor}\label{c:main}
Let $n$ be a natural number and $p$ be a prime number such that
$p\not\big|\,\, n$. Then
\begin{equation}\label{eq:rec12}
p^*(np)=\sum\limits_{d\big| n} p^*(d).
\end{equation}
In particular, for any two distinct prime numbers $p,q$ and for
any natural number $n$,
\begin{equation}\label{eq:rec13}
p^*(pq^n)=\sum\limits_{i=0}^{n} p(i),
\end{equation}
\end{cor}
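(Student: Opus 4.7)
The plan is to derive the first identity directly from the specialised recurrence relation in Theorem \ref{t:main}, exploiting the fact that $p\nmid n$ forces $n_1=1$ in a suitable arrangement of primes. Once the first identity is in place, the second will fall out immediately by specialising $n$ to be a prime power and invoking Corollary \ref{c:pn}.

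For the first identity, I would start by writing $np=p_1^{n_1}p_2^{n_2}\cdots p_k^{n_k}$ with $p_1=p$, so that $n_1=1$ (this is permissible by Remark \ref{r:rearr}) and $n=p_2^{n_2}\cdots p_k^{n_k}$. Substituting into the ``in particular'' formula of Theorem \ref{t:main}, the outer sum $\sum_{i=1}^{n_1}i$ collapses to the single term $i=1$, leaving
\[
p^*(np)\ =\ \sum_{d\,\big|\,n}\lambda(pd)\,p^*\!\left(\tfrac{n}{d}\right).
\]
The key observation is that for any $m=p\cdot d$ with $d\mid n$, the content $c(m)=\gcd(1,r_2,\ldots,r_k)=1$, so $\lambda(pd)=1$ uniformly. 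Reindexing the sum by $d\mapsto n/d$ (which is a bijection on the divisors of $n$) then yields $p^*(np)=\sum_{d\mid n}p^*(d)$, as desired.

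For the particular case, I would apply the identity just established with $n=q^m$ (where $q$ is a prime distinct from $p$), in which case the divisors of $q^m$ are precisely $q^0,q^1,\ldots,q^m$. Thus
\[
p^*(pq^m)\ =\ \sum_{i=0}^{m}p^*(q^i),
\]
and Corollary \ref{c:pn} converts each $p^*(q^i)$ into $p(i)$, giving the claimed formula $p^*(pq^m)=\sum_{i=0}^{m}p(i)$.

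There is no real obstacle here: the corollary is essentially a direct specialisation of Theorem \ref{t:main}, and the only point requiring care is the verification that $\lambda(pd)=1$ for every relevant $d$, which I would confirm by unpacking the definition of content. The remaining work is just bookkeeping of indices.
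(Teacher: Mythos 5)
Your proposal is correct and follows essentially the same route as the paper: the paper also obtains the corollary by specialising the recurrence of Theorem \ref{t:main} to the case $n_1=1$, noting that $c(m)=1$ and hence $\lambda(m)=1$ for every relevant divisor, and then deducing the prime-power case from Corollary \ref{c:pn}. The reindexing $d\mapsto n/d$ and the explicit check that $\lambda(pd)=1$ are exactly the (small) points the paper relies on as well.
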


\begin{exmp}
Let $n=72$. Then $n=2^3.3^2=3^2.2^3$. Now the divisors of $2^3$
are $1,2,4,8$. So we have
\begin{eqnarray*}
&&p^*(72)\cdot 2\\
&=&\sum\limits_{i=1}^{2} i\cdot
\Big(\sum\limits_{d\big| 8} \lambda(3^i d)\
p^*\big(\frac{72}{3^id}\big)\Big),\\
&=& \big\{p^*(24)+ p^*(12)+p^*(6)+ p^*(3)\big\}+ \\
&&\null\hspace{1.5in} 2\cdot \big\{\lambda(3^2)\ p^*(8)+
\lambda(3^2\cdot 2)\ p^*(4)+ \lambda(3^2\cdot 2^2)\ p^*(2)+
\lambda(3^2\cdot 2^3)\ p^*(1)\big\}.
\end{eqnarray*}

Now $p^*(2)=p^*(3)=1,\ p^*(4)=p^*(2^2)=p(2)=2,\
p^*(8)=p^*(2^3)=p(3)=3$ by Corollary \ref{c:pn}. Again
$p^*(6)=p^*(3\cdot 2)=1+p(1)=2,\ p^*(12)=p^*(3\cdot
2^2)=2+p(2)=4,\ p^*(24)=p^*(3\cdot 2^3)=4+p(3)=7$ by
(\ref{eq:rec13}). Thus we have
$$p^*(72)\cdot 2=\big\{7+4+2+1\}+2\cdot \big\{(1+\frac{1}{2})\cdot
3+1\cdot 2+(1+\frac{1}{2})\cdot 1+1\cdot 1\big\}=14+18=32.$$
Hence $p^*(72)=16$. Indeed $16$ factorizations of $72$ are
\begin{eqnarray*}
72&=&2.2.2.3.3\ =\ 2.2.2.9\ =\ 2.2.3.6\ =\ 2.2.18\
=\ 2.3.3.4\ =\ 2.3.12\ =\ 2.4.9\\
&=&2.6.6\ =\ 2.36\ =\ 3.3.8\ =\ 3.4.6\ =\ 3.24\ =\ 4.18\ =\ 6.12\
=\ 8.9\ =\ 72.
\end{eqnarray*}
\end{exmp}

\vspace{0em}
\renewcommand{\baselinestretch}{1}
\bibliographystyle{amsplain}
{\small \bibliography{fact}} 

\end{document}